\newtheorem{definition}{Definition}
\newtheorem{remark}{Remark}
\newtheorem{problem}{Problem}
\newtheorem{lemma}{Lemma}
\newtheorem{proposition}{Proposition}
\newtheorem{theorem}{Theorem}
\DeclareMathOperator*{\argmin}{arg\,min}
\newcommand{\mc}{\mathcal}
\newcommand{\be}{\begin{equation}}
\newcommand{\ee}{\end{equation}}
\newcommand{\ba}{\begin{array}}
\newcommand{\ea}{\end{array}}
\newcommand{\ds}{\displaystyle}
\newcommand{\1}{\mathbf 1}
\newcommand{\I}{\mathbf I}
\newcommand{\Prob}{\mathbb P}
\newcommand{\E}{\mathbb E}
\newcommand{\supp}{\text{supp}}
\title{Information design in Bayesian routing games}
\author{Leonardo Cianfanelli, Alexia Ambrogio and Giacomo Como
\thanks{The authors are with Dipartimento di Scienze Matematiche, Politecnico di Torino. Giacomo Como is also with  the Department of Automatic Control, Lund University (e-mail: leonardo.cianfanelli@polito.it, alexia.ambrogio@studenti.polito.it, giacomo.como@polito.it, webpage: https://sites.google.com/view/leonardo-cianfanelli/, https://staff.polito.it/giacomo.como/)
}}
\begin{document}

\maketitle
\thispagestyle{empty}
\pagestyle{empty}

\begin{abstract}
	We study optimal information provision in transportation networks when users are strategic and the network state is uncertain. An omniscient planner observes the network state and discloses information to the users with the goal of minimizing the expected travel time at the user equilibrium. Public signal policies, including full-information disclosure, are known to be inefficient in achieving optimality. For this reason, we focus on private signals and restrict without loss of generality the analysis to signals that coincide with path recommendations that satisfy \emph{obedience constraints}, namely users have no incentive in deviating from the received recommendation according to their posterior belief. We first formulate the general problem and analyze its properties for arbitrary network topologies and delay functions. Then, we consider the case of two parallel links with affine delay functions, and provide sufficient conditions under which optimality can be achieved by information design. Interestingly, we observe that the system benefits from uncertainty, namely it is easier for the planner to achieve optimality when the variance of the uncertain parameters is large. We then provide an example where optimality can be achieved even if the sufficient conditions for optimality are not met.
\end{abstract}

\begin{keywords}
	Information design, Bayesian routing games.
\end{keywords}

\section{Introduction}\label{sec:introduction}

Routing games describe the behaviour of a non-atomic set of strategic users travelling in transportation networks \cite{wardrop1952road}. Each link of the network is endowed with a delay function that returns the travel time along the link as a function of the flow, and users aim at selecting paths with minimum travel time. A user equilibrium of the game is a flow such that no user has an incentive in deviating to other paths from the one she has selected. Typically, user equilibria are suboptimal in terms of total travel time on the network. The goal of a network planner is then to influence the user behaviour by incentive mechanisms to steer the equilibrium of the game towards the system optimum. Standard approaches are to charge tolls \cite{cole2006much,maggistro2018stability}, or to intervene on the network structure \cite{cianfanelli2023optimal,farahani2013review}. When the state of the network is uncertain (we refer to this class of games by \emph{Bayesian routing games}), this uncertainty can be leveraged by an omniscient planner that observes the state of the network and discloses information to the users. How to disclose this information to minimize the system cost at equilibrium is called in the literature \emph{information design problem}.

The role of information in routing games has been widely studied in the literature. Informational Braess' paradox shows that the travel time at the equilibrium may decrease when a fraction of users are not aware of some roads of the network \cite{acemoglu2018informational}. In \cite{wu2017informational,wu2021value} the authors analyze how the efficiency of the equilibrium varies in terms of the fraction of adopters of information providers, showing that too much information might hurt the system. The information design problem in routing games has been first considered in \cite{daskam17}, where the authors show by examples that in routing games revealing full-information and all public signal policies are in general suboptimal and providing private information is preferable. In \cite{tavafoghi2017informational} it is studied an information design problem with two parallel links and affine delay functions where the free-flow delay of one link is a discrete random variable, and sufficient conditions for optimality of private information are provided.
In \cite{savla}, the authors focus on computational perspectives and propose an algorithm to solve efficiently the information design problem with polynomial delay functions and arbitrary network topologies. Dynamical information provision is studied in \cite{dughmi2016algorithmic,meigs2020optimal,tavafoghi2019strategic}.

Motivated by the suboptimality of public signal policies, in this paper we consider private information provision in Bayesian routing games. Compared to \cite{tavafoghi2017informational}, we formulate Bayesian routing games and the corresponding information design problem for arbitrary network topologies and arbitrary delay functions described by continuous random variables. We show how to equivalently reformulate the problem (due to the revelation principle \cite{bergemann16a,bergemann19}) by restricting the signal space to path recommendations such that users have no incentive in deviating from the received recommendation, which is known as \emph{obedience constraint}. The policy then specifies the fraction of users that receive each recommendation as a function of the network state. The users are not aware of the recommendations received by other ones, but the prior distribution of the network state and the information policy are publicly known.

After analyzing the properties of the general problem, we focus on the special case of two links with affine delay functions, which is a convex problem. We show that under a restrictive assumption on the support of the random variables that describe the delay functions, there exist sufficient and necessary conditions on the moments of the random variables under which the optimality can be achieved by information design. Interestingly, we show that the uncertainty on the network state is beneficial for the system under the optimal information policy, namely the planner is able to persuade the users to distribute according the system optimum flow, which typically does not happen when users have full-information on the network state. We also remark that the approach can be generalized to arbitrary topologies although the corresponding information design problem is in general non-convex, which is left for future research. Finally, we analyze a particular case that allows explicit computation, showing that optimality can be achieved even if the restrictive assumption on the support of the random variables is not met.

The paper is organized as follows. In Section \ref{sec:model,problem} we present Bayesian routing games, formulate the information design problem, and analyze its properties. We then present the results in Section \ref{sec:results}. In Section \ref{sec:conclusion} we summarize the contribution and discuss future research lines.

\subsection{Notation}
Given a vector $x$, we let $x'$ denote its transpose. $\1$, $\I$ and $\delta^{(i)}$ denote the vector of all ones, the identity matrix the vector with $1$ in $i$-th entry $\delta_i^{(i)}$ and $0$ in all the other positions, whose size may be deduced from the context.
For a finite set $\mc P$, let $\Delta_{\mc P}$ denote the simplex on $\mc P$, i.e.,
$$
\Delta_{\mc P} = \{x \in \mathds R_+^{\mc P}: \1'x = 1\}.
$$
Given $x,a,b$ in $\mathds R$, with $a\le b$, we let
$$
[x]_a^b=\max\{a,\min\{x,b\}\} = \begin{cases}
	a \ &\text{if} \ x < a \\
	x \ &\text{if} \ x \in [a,b] \\
	b \ &\text{if} \ x > b. \\
\end{cases}
$$
\section{Model and problem formulation}\label{sec:model,problem}
We present the model in Section \ref{sec:model}, and formulate the problem in Section \ref{sec:problem}.
\subsection{Model}
\label{sec:model}
We model the transportation network topology as a directed multigraph $\mc{G}=(\mc{N},\mc{E})$, where $\mc{N}$ is the set of nodes and
$\mc{E}$ is the set of links. Every link $e$ in $\mc E$ is directed from its tail node $\sigma(e)$ in $\mc N$ to its head node $\xi(e)$ in $\mc N$. The transportation network topology is fully characterized by its node-link incidence matrix $B$ in $\mathds{R}^{\mc N \times \mc E}$, with entries $B_{ne} = 1$ if $\sigma(e)=n$, $-1$ if $\xi(e)=n$, or $0$ otherwise. 

We denote by $o$ and $d$ in $\mc{N}$, with $o\ne d$, the origin and, respectively, the destination nodes. We assume that $d$ is reachable from $o$. 
Let a unitary mass of non-atomic users travel from $o$ to $d$, and define $\nu = \delta^{(o)}-\delta^{(d)}$. We then let 
$$
\mc F = \{f \in \mathds{R}^{\mc E}_+: Bf=\nu\}
$$
be the set of unitary $o$-$d$ flows, where a flow describes how users are distributed across links.

The state of the network is described by a random variable $\theta$ taking values on a probability space $(\Theta, \mathcal{A}, \Prob)$.
Each link of the network is endowed with a delay function $\tau_e : \mathds R_+ \times \Theta \to \mathds R_+$ that depends on the flow and on the realization of the network state. We shall assume that $\tau_e(f_e,\theta)$ is increasing in $f_e$ for every $\theta$ in $\Theta$ to consider congestion effects.

A network flow is a function $f : \Theta \to \mc F$ that assigns to each network state $\theta$ in $\Theta$ a vector $f(\theta)$ in $\mc F$. 
We define the system cost as the expected travel time on the network, i.e.,
$$
C(f) = \int_\Theta \sum_{e \in \mc E} f_e(\theta) \tau_e (f_e(\theta),\theta) d\Prob(\theta)\,. 
$$
Network flows depend on the available information on the network state. A setting of interest is the \emph{full-information} setting, where all users know exactly the realization of the network state $\theta$.
Given the information setting, we can define two solution concepts, the \emph{system optimum} flow and the \emph{user equilibrium} flow. System optimum flows describe a scenario where each user is forced to select a path with the goal of minimizing the system cost. However, users are typically strategic and aim at minimizing their own travel time. This behavior is captured by the notion of user equilibrium flow. 

In full-information setting, flows are allowed to depend on the network state $\theta$, which is known to the users. We can imagine this as the result of an omniscient planner observing the network state and revealing this information to the users. In this setting, the system optimum is a network flow that depends on $\theta$ and minimizes the system cost:
\be\label{eq:system_opt_full}
	f^* \in \argmin_{f: \Theta \to \mc F} \int_\Theta \sum_{e \in \mc E} f_e\left(\theta\right) \tau_e (f_e(\theta),\theta) d\Prob(\theta).
\ee
Observe that in full-information setting, $f^*$ is independent of the prior distribution of $\theta$, since $f^*(\theta)$ is the unitary $o$-$d$ flow that minimizes $\sum_{e \in \mc E} f_e\left(\theta\right) \tau_e (f_e(\theta),\theta)$ for a given network state $\theta$.
To formalize the notion of user equilibrium, let $\mc P$ denote the set of paths from $o$ to $d$ and let $z: \Theta \to \Delta_{\mc P}$ denote a path flow, describing how users distribute on the paths given the network state. Let $A$ in $\mathds R^{\mc E \times \mc P}$ be the link-path incidence matrix, with entries $A_{ei} = 1$ if link $e$ belongs to path $i$, or $0$ otherwise. A path flow $z$ induces a unique network flow $f(\theta) = Az(\theta)$, and the cost of a path $i$ for a network state $\theta$ is the sum of the delay functions of the links along the path, i.e.,
$$
c_i (f(\theta),\theta) = \sum_{e \in \mc E} A_{ei} \tau_e(f_e(\theta),\theta).
$$ 
The full-information user equilibrium is a network flow $f^W: \Theta \to \mc F$ that admits $z^W: \Theta \to \Delta_{\mc P}$ such that $f^W(\theta) = Az^W(\theta)$ and such that, for every $i$ in $\mc P$ and $\theta$ in $\Theta$, $z_i^W(\theta)>0$ implies
\begin{equation*}
	c_i (f^{W}(\theta),\theta) \le
	c_j (f^{W}(\theta),\theta) \quad \forall j \in \mc P,
\end{equation*}
In other words, the full-information user equilibrium is a network flow such that, for each network state $\theta$, each used path has optimal cost, which means that no user has an incentive in deviating to other paths.

In general, it is known that revealing full-information is suboptimal for the system, namely the full-information user equilibrium is suboptimal compared to the full-information system-optimum, which is by definition the best network flow in terms of system cost. Moreover, it is known that any public signal is inefficient in achieving optimality \cite{daskam17,tavafoghi2017informational,savla}. For this reason, in the next part of the paper we shall consider how to design optimal private signal policies.

\subsection{Problem formulation}\label{sec:problem}
We assume that an omniscient planner observes the realization of the network state, and given this observation sends private signals to the users, with the goal of minimizing the expected total travel time of the user equilibrium flow. A signal policy is a map $\pi : \Theta \to \Delta_{\mc P}$ that assigns to each network state the fraction of users that is recommended to take each path. One might consider more general types of signals, however assuming that signals coincide with path recommendations is without loss of generality, as stated by the revelation principle \cite{bergemann19,bergemann16a}. 

A fundamental assumption is that the signal policy and the prior distribution of the network state are known to the users. Once the signals have been sent, users update their belief on the network state. Let $d\Prob_i$ denote the posterior belief of users that receive recommendation $i$, which is computed according to Bayes' formula, i.e.,
\begin{equation}\label{eq:posterior}
	d\Prob_i(\theta) = \frac{\pi_i (\theta) d\Prob(\theta)}{\int_{\Theta} \pi_i (\omega) d\Prob(\omega)}\,,
\end{equation}  
and let $\E_i[\cdot]=\int_{\Theta}\cdot\, d\Prob_i(\theta)$ denote the expected value after receiving signal $i$, i.e., according to posterior belief \eqref{eq:posterior}. 

As the users update their beliefs on the network state, the user equilibrium varies accordingly. Indeed, users follow the recommendation only if it is a best response to do so. Let $y$ in $\mathds R_+^{\mc P \times \mc P}$ be a matrix such that $y\1 = \1$, whose element $y_{ij}$ denotes the fraction of users that choose path $j$ among those who receive recommendation of taking path $i$. We can interpret $y$ as the action distribution of a population game with multiple populations that differ in the received signal. Given $\pi$ and $y$, the network flow depends on the network state $\theta$ via
$$
f_e^{\pi,y}(\theta) = \sum_{j \in \mc P} A_{ej} \sum_{i \in \mc P} \pi_i(\theta) y_{ij}\,,\qquad\forall e\in\mc E\,,
$$
or, more compactly, 
$$f^{\pi,y}(\theta)=Ay'\pi(\theta)\,.$$
\begin{definition}[Bayesian user equilibrium]
	Given a policy $\pi$, a flow $f^{\pi,y}(\theta)=Ay'\pi(\theta)$ is a Bayesian user equilibrium if $\pi_i(\theta)y_{ij}>0$ for at least a $\theta$ in $\Theta$ and $i,j$ in $\mc P$ implies that 
$$
\E_i[c_j(f^{\pi,y}(\theta),\theta)] \leq \E_i[c_k(f^{\pi,y}(\theta),\theta)] \quad \forall k \in \mc P.
$$
\end{definition}\smallskip

In other words, we require that if some users receive recommendation $i$ and travels on path $j$, then path $j$ has to be optimal according to the posterior belief of users that have received recommendation $i$. The next result characterizes Bayesian user equilibria.

\begin{proposition}\label{prp:pot}
	Given a policy $\pi$, a flow $f^{\pi,y^*(\pi)}(\theta)=A(y^*(\pi))'\pi(\theta)$
	is a Bayesian user equilibrium if and only if $y^*(\pi)$ is a solution of the convex program
	\begin{equation}\label{eq:y} 
		y^*(\pi) \in \argmin_{y \in \mathds R^{\mc P \times \mc P}_+: y\1 = \1} \Phi_\pi \left(y\right),
	\end{equation}
	where
$$
		\Phi_\pi\left(y\right) = \int_{\Theta} \sum_{e \in \mc E} \int_0^{(Ay'\pi(\theta))_e} \tau_e (\theta,s)ds d\Prob(\theta).
$$\smallskip
\end{proposition}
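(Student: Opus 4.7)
The plan is to recognize $\Phi_\pi$ as the Bayesian analogue of the Beckmann potential and to verify that the definition of Bayesian user equilibrium is exactly the KKT condition of its minimization over the polytope $\{y\in\mathds R_+^{\mc P\times\mc P}:y\1=\1\}$.

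First I would verify convexity of $\Phi_\pi$. For each fixed $\theta\in\Theta$ and $e\in\mc E$, the map $s\mapsto\int_0^s\tau_e(\theta,\sigma)\dd\sigma$ is convex because $\tau_e(\theta,\cdot)$ is increasing; composition with the affine map $y\mapsto(Ay'\pi(\theta))_e$ preserves convexity, and convexity is then preserved under summation over $e$ and expectation over $\theta$. Hence \eqref{eq:y} is a convex program on a convex polyhedral feasible set, so its KKT conditions are both necessary and sufficient for optimality.

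Next I would compute the gradient by differentiating under the integral sign. Since $\partial f_e^{\pi,y}(\theta)/\partial y_{ij}=A_{ej}\pi_i(\theta)$, a standard computation yields
$$
\frac{\partial \Phi_\pi}{\partial y_{ij}}(y)=\int_\Theta \pi_i(\theta)\sum_{e\in\mc E}A_{ej}\tau_e(\theta,f_e^{\pi,y}(\theta))\dd\Prob(\theta)=p_i\,\E_i\left[c_j(f^{\pi,y}(\theta),\theta)\right],
$$
where $p_i=\int_\Theta\pi_i(\theta)\dd\Prob(\theta)$ is the prior probability of issuing recommendation $i$ and the last identity invokes the definition \eqref{eq:posterior} of the posterior. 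Attaching multipliers $\lambda_i\in\mathds R$ to the equalities $\sum_j y_{ij}=1$ and $\mu_{ij}\ge 0$ to the inequalities $y_{ij}\ge 0$, the stationarity and complementary slackness conditions read
$$
p_i\,\E_i[c_j(f^{\pi,y}(\theta),\theta)]=\lambda_i+\mu_{ij},\qquad \mu_{ij}y_{ij}=0.
$$
For indices with $p_i>0$, this is equivalent to $\E_i[c_j(f^{\pi,y}(\theta),\theta)]\le \E_i[c_k(f^{\pi,y}(\theta),\theta)]$ for every $k\in\mc P$ whenever $y_{ij}>0$, which is precisely the obedience condition appearing in the definition. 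Indices with $p_i=0$ play no role in either direction, since then $\pi_i(\theta)=0$ almost surely, which both removes the corresponding rows from $\Phi_\pi$ and renders the equilibrium condition vacuous.

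The most delicate step I anticipate is justifying the interchange of derivative and integral that gives the clean expression for $\partial\Phi_\pi/\partial y_{ij}$; this requires enough integrability of $\tau_e(\theta,\cdot)$ over the range of flows that arise as $y$ varies in a neighbourhood of the candidate optimizer, which is implicit in the standing assumption that $C(f)$ be finite. Once this is handled, the argument reduces to the classical Beckmann derivation, with the posteriors $d\Prob_i$ playing the role that point masses play in the deterministic case.
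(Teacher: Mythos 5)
Your proposal is correct and follows essentially the same route as the paper's proof: establish convexity of $\Phi_\pi$ as an expectation of convex compositions with affine maps, compute $\partial\Phi_\pi/\partial y_{ij}=p_i\,\E_i[c_j(f^{\pi,y}(\theta),\theta)]$, and identify the first-order optimality conditions over the constraint set with the obedience inequalities defining the Bayesian user equilibrium. Your explicit KKT bookkeeping, the separate treatment of recommendations with $p_i=0$, and the remark on differentiating under the integral sign are just more careful renderings of steps the paper performs implicitly.
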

Furthermore, the equilibrium $f^{\pi,y^*(\pi)}$ is unique for every policy $\pi$.

\begin{proof}
Notice that $\Phi_\pi(y)$ can be written as
$$
\Phi_\pi(y) = \int_{\Theta} g(Ay'\pi(\theta),\theta) d\Prob(\theta),
$$ 
where $g(x,\theta) = \sum_e \int_0^{x_e} \tau_e(s,\theta) ds$ is convex in $x$ because the delay functions are non-decreasing.
Hence, $\Phi_\pi(y)$ is the combination of convex functions and is therefore convex. We now show that if $y^*(\pi)$ satisfies \eqref{eq:y}, then $f^{\pi,y^*(\pi)}(\theta) = A(y^*(\pi))'\pi(\theta)$ is a Bayesian user equilibrium. Since $y$ satisfies $y\1 = \1$ by construction, $y^*(\pi)$ satisfies \eqref{eq:y} if and only if, for every $i,j$ such that $y_{ij}^*(\pi)>0$, condition
$$
\frac{\partial \Phi_\pi(y^*(\pi))}{\partial y_{ik}} -  \frac{\partial \Phi_\pi(y^*(\pi))}{\partial y_{ij}} \ge 0
$$
holds true. Notice that
\begin{align*}
	\frac{\partial {\Phi_\pi(y)}}{\partial y_{ij}} &= \sum_e A_{ej} \int_{\Theta} \tau_e ((Ay'\pi(\theta))_e, \theta)\pi_i(\theta) d\Prob (\theta) \\
	&= \sum_e A_{ej} \int_{\Theta} \tau_e (f^{\pi,y}_e(\theta), \theta) d\Prob_i (\theta) \cdot \int_{\Theta} \pi_i(\theta) d\Prob (\theta) \\
	&= \E_i[c_j(f^{\pi,y}(\theta),\theta)] \cdot \int_{\Theta} \pi_i(\theta) d\Prob (\theta).
\end{align*}
Necessary and sufficient conditions for optimality state that if $y^*_{ij}(\pi)>0$ then for all $k$ in $\mc P$ 
$$
\E_i[c_k(f^{\pi,y^*(\pi)}(\theta),\theta)-c_j(f^{\pi,y^*(\pi)}(\theta),\theta)] \ge 0,
$$
which is the definition of Bayesian user equilibrium. The uniqueness of the equilibrium then follows from the monotonicity of the delay functions.
\end{proof}


\begin{remark}
	Observe that Proposition \ref{prp:pot} implies that $y^*(\pi)$ is the Nash equilibrium of a population game that admits a weighted potential. For more details, see \cite{sandholm2001potential}.
\end{remark}

For simplicity of notation, from now on we shall denote by $f^\pi$ the Bayesian user equilibrium corresponding to information policy $\pi$.
We express the cost of a policy $\pi$ as the expected total travel time at the corresponding Bayesian user equilibrium, i.e.,
$$
	C(\pi) = \int_\Theta \sum_ef_e^{\pi}(\theta)\tau_e\left(f_e^{\pi}(\theta),\theta\right) d\Prob(\theta).
$$
The system planner then wants to solve the problem
\begin{equation}\label{eq:problem1}
	\pi^* \in \argmin_{\pi: \Theta \to \Delta_\mc P} C(\pi).
\end{equation}
Problem \eqref{eq:problem1} can be reformulated by using the \emph{revelation principle} \cite{bergemann19,bergemann16a}, which states that given a policy $\pi$, there always exists a policy $\tilde\pi$ such that $C(\pi) = C(\tilde\pi)$ and $y^*(\tilde\pi) = \mathbf{I}$, hence $f^{\tilde\pi}(\theta)=A\tilde\pi(\theta)$. In other words, this means that restricting the attention to policies under which no user wants to deviate from the received recommendation is without loss of generality. This requirement is known as \emph{obedience constraint}. 
Revelation principle allows to formulate the information design problem as follows.
\begin{problem}\label{prob}
The optimal information policy is
	$$
		\pi^* = \argmin_{\pi: \Theta \to \Delta_\mc P} \int_{\Theta} \sum_{e \in \mc E} f_e^\pi(\theta) \tau_e(f_e^\pi(\theta),\theta)d\Prob(\theta)\\
	$$
	subject to
	\begin{gather*}
	\pi_i(\omega)\E_i[c_i (f^\pi(\theta),\theta) - c_j (f^{\pi}(\theta),\theta)] \le 0, \forall i,j \in \mc P, \forall \omega \in \Theta \\
	f^\pi(\theta) = A \pi(\theta).
	\end{gather*}
\end{problem}\smallskip

\begin{remark}
Multiplying the obedience constraints by $\pi_i(\omega)$ allows a distinction in two cases: if some user receives signal $i$, i.e., $\pi_i(\omega)>0$ for at least a $\omega$ in $\Theta$, then we impose that following the recommendation must be convenient; if no user receives signal $i$, i.e., $\pi_i(\omega)=0$ for every $\omega$, then the constraint is trivially satisfied. 
\end{remark}
\begin{remark}\label{remark:convex}
Observe that \eqref{eq:problem1} is a bi-level program, namely the planner optimizes the policy $\pi$, but the performance of the policy depends on $y^*(\pi)$, which in turn is the result of the convex program \eqref{eq:y} that depends on $\pi$. Instead, Problem~ \ref{prob} is a single-level optimization problem with obedience constraints, which is more tractable. Notice also that, if the delay functions are convex in the first argument, the objective function of Problem~\ref{prob} is convex in $\pi$ independently of the network topology and the prior distribution of the network state. In contrast, the obedience constraints are in general non-convex, and they convex if the network has two parallel links and affine delay functions.
\end{remark}

We now define the price of anarchy ($PoA$).
\begin{definition}[Price of anarchy]
	Let $\pi^*$ be the solution of Problem \ref{prob}. 
	Then,
	$$
		PoA = \frac{\int_{\Theta} \sum_{e \in \mc E} f_e^{\pi^*}(\theta) \tau_e(f_e^{\pi^*}(\theta),\theta)d\Prob(\theta)}{\int_{\Theta} \sum_{e \in \mc E} f_e^*(\theta) \tau_e(f_e^*(\theta),\theta)d\Prob(\theta)} \ge 1.
	$$
\end{definition}\medskip

In other words, in the context of information design the price of anarchy measures how suboptimal the optimal policy $\pi^*$ is compared to the full-information system optimum.

\section{Information design on two parallel links}\label{sec:results}
In this section we consider the information design problem in a network with two parallel links and affine delay functions. For this case, we first establish in Section \ref{sec:gen_prior} sufficient conditions on the support and on the moments of the random variables that describe the delay functions under which $PoA=1$, and apply these conditions to a case of interest in Section \ref{sec:a_det}. Then, we consider in Section \ref{sec:uniform} a case-study showing that optimality can still be achieved even if these sufficient conditions are not met.

\subsection{Arbitrary prior}\label{sec:gen_prior}
We consider a graph $\mc G=(\mc N,\mc E)$ with two nodes $\mc N = (o,d)$ and two parallel links from $o$ to $d$. We assume that the delay functions are affine in the flow, i.e.,
\begin{equation*}
	\tau_e(a_e,b_e,f_e)=a_ef_e+b_e, \quad e \in \{1,2\}, \\
\end{equation*}
where $a_e, b_e$ are non-negative random variables.
In this setting the policy is a map $\pi(a,b)$ such that $\pi_1(a,b)+\pi_2(a,b) = 1$, hence we can formulate the problem in terms of $\pi_1$. With this in mind, by noticing that the policy depends on $b$ via $x=b_1-b_2$, and rearranging the terms, the problem reads as follows.
\begin{problem}\label{prob:affine}
	Find $\pi_1^*:\Theta \to [0,1]$ to minimize
$$ \int_{\Theta}[(x-2a_2)\pi_1(a,x)+(a_1+a_2){\pi_1^2(a,x)}]d\mathbb{P}(a,x)
$$
under obedience constraints
\be\label{eq:ob1}	
		\int_{\Theta} [(x-a_2)\pi_1(a,x)+(a_1+a_2)\pi_1^2(a,x)]d \mathbb{P}(a,x) \le 0\,,\ee
		\be\ba{rcl}\ds\int_{\Theta} [a_2-x+(x-a_1-2a_2)\pi_1(a,x)\\\ds\qquad +(a_1+a_2)\pi_1^2(a,x) ]d\mathbb{P}(a,x) &\leq & 0\,.\ea
	\label{eq:ob2}\ee
\end{problem}\medskip


\begin{lemma}\label{lemma:tildepi}
Consider Problem \ref{prob:affine}. Then,
	$$
		f_1^*\left(a,x\right)= \left[\frac{2a_2-x}{2\left(a_1+a_2\right)}\right]_0^1
		\label{eq_uncpol}.
	$$
\end{lemma}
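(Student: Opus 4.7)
The plan is to exploit the fact, already pointed out after \eqref{eq:system_opt_full}, that the full-information system optimum decouples across realizations of the network state: for each fixed $(a,b)$ the vector $f^*(a,b)$ is the unitary $o$--$d$ flow that minimizes the instantaneous total cost. Since the network consists of two parallel links, writing $f_2 = 1-f_1$ reduces the problem to a one-dimensional minimization over $f_1\in[0,1]$.

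First I would substitute the affine delay functions and $f_2 = 1-f_1$ into the instantaneous cost to obtain
\begin{equation*}
\Psi(f_1;a,b) = a_1 f_1^2 + b_1 f_1 + a_2(1-f_1)^2 + b_2(1-f_1)\,.
\end{equation*}
Because $a_1,a_2\ge 0$, the map $f_1\mapsto \Psi(f_1;a,b)$ is convex (quadratic with non-negative leading coefficient $a_1+a_2$), so the constrained minimizer on $[0,1]$ is the projection of the unconstrained stationary point onto $[0,1]$.

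Next I would compute the stationary point by setting $\partial\Psi/\partial f_1 = 0$, which gives
\begin{equation*}
2(a_1+a_2)f_1 + (b_1-b_2) - 2a_2 = 0\,,
\end{equation*}
and therefore, using the change of variable $x = b_1-b_2$, the unconstrained optimizer is $\tfrac{2a_2-x}{2(a_1+a_2)}$. Finally, projecting onto $[0,1]$ via the $[\,\cdot\,]_0^1$ operator yields the claimed formula. A minor point to comment on is the degenerate case $a_1+a_2 = 0$, where the objective becomes linear and the minimizer is a boundary point of $[0,1]$ consistent with the clipping convention.

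There is no real obstacle here, since the decoupling over $\theta$ (so $(a,b)$) inherited from \eqref{eq:system_opt_full} reduces the task to a scalar convex quadratic minimization with a box constraint. The only care needed is to justify why one may optimize pointwise in the state, which follows from the fact that the set $\mc F$ does not couple flows across different realizations and that the objective is an integral of pointwise non-negative quantities.
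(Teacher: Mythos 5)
Your proof is correct and follows essentially the same route as the paper's: both exploit the pointwise (per-state) decoupling of the full-information system optimum, reduce to a scalar convex quadratic in $f_1$ via $f_2=1-f_1$, and obtain the claimed formula by projecting the stationary point onto $[0,1]$. Your derivation is slightly more explicit in that it reconstructs the quadratic objective from the delay functions and remarks on the degenerate case $a_1+a_2=0$, but the underlying argument is the same.
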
\medskip

\begin{proof}
Notice that $f^*$ does not depend on the prior $d\Prob$, since $f^*(a,x)$ is found for every $a,x$ as the minimum of
$$
(x-2a_2)f_1(a,x)+(a_1+a_2){f_1^2(a,x)},
$$
which is convex for every $a,x$ since $a_e$ are non-negative. The claim follows immediately from computing the stationary point of the objective function with respect to $f_1$ and noticing that $f_1$ belongs to $[0,1]$ by definition.
\end{proof}

Observe that in networks with parallel links, under obedience constraints it holds $f^\pi = \pi$, because $A=\I$ and $z=\I$. Hence, if the full-information system optimum $f^*$ satisfies the obedience constraints, then $\pi^* = f^*$ and $PoA=1$, which means that private information design can achieve optimality. This idea is the core of the next result. 
To better formalize the next result, let us define the support of a random variable. Given a random variable $X$ in $\mathds{R}^n$, its support is the set of values that the random variable can take, i.e.,
$$
\supp(X) = \{x \in \mathds{R}^n: \mathbb{P}(B(x,\epsilon))>0 \ \forall \epsilon > 0\},
$$
where $B(x,\epsilon)$ is the open ball centered in $x$ with radius $\epsilon$.

\begin{theorem}\label{thm1}
	Consider Problem \ref{prob:affine}. Assume that:
	\begin{equation}\label{eq:supp1}
		\begin{cases}
			\max(\supp(x)) \le 2\min(\supp(a_2)) \\
			\min(\supp(x)) \ge -2\max(\supp(a_1)). 
		\end{cases}
	\end{equation}
	Then, $\pi^*= f^*$ and $PoA=1$ if and only if
	\begin{equation}\label{eq:momenti1}
		\begin{cases}
			\displaystyle \mathbb{E}\left[\frac{2a_2x-x^2}{\left(a_1+a_2\right)}\right] \leq 0\\[8pt]
			\displaystyle\mathbb{E}\left[\frac{-2a_1 x-x^2}{\left(a_1+a_2\right)}\right] \leq 0
		\end{cases}
	\end{equation}
\end{theorem}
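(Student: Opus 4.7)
The plan is to exploit the fact that in the two-parallel-link setting obedience forces $f^\pi=\pi$, so Problem~\ref{prob:affine} reduces to minimizing the convex objective
\[
J(\pi_1)=\int_\Theta\bigl[(x-2a_2)\pi_1(a,x)+(a_1+a_2)\pi_1^2(a,x)\bigr]d\mathbb{P}(a,x)
\]
over $\pi_1:\Theta\to[0,1]$ subject to the obedience constraints \eqref{eq:ob1}--\eqref{eq:ob2}. Lemma~\ref{lemma:tildepi} already identifies the unconstrained minimizer over pointwise feasible maps as $f_1^*$; hence, $\pi^*=f^*$ if and only if $f^*$ satisfies both obedience constraints. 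The proof therefore reduces to substituting $f_1^*$ into \eqref{eq:ob1} and \eqref{eq:ob2} and checking that each constraint is equivalent to the corresponding inequality in \eqref{eq:momenti1}.

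First I would use assumption \eqref{eq:supp1} to drop the clipping in the expression for $f_1^*$. The bounds $\max(\supp(x))\le 2\min(\supp(a_2))$ and $\min(\supp(x))\ge -2\max(\supp(a_1))$ imply $0\le 2a_2-x\le 2(a_1+a_2)$ almost surely, so that almost everywhere
\[
f_1^*(a,x)=\frac{2a_2-x}{2(a_1+a_2)}\in[0,1].
\]
This is the key simplification: under \eqref{eq:supp1}, $f_1^*$ is an interior minimizer, so it is a candidate for the constrained problem as well, provided it is obedient.

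Next I would carry out the substitution $\pi_1=f_1^*$ in \eqref{eq:ob1} and \eqref{eq:ob2}. Writing each integrand with common denominator $4(a_1+a_2)$ and collecting terms, a straightforward but slightly tedious algebraic simplification yields
\[
(x-a_2)f_1^*+(a_1+a_2)(f_1^*)^2=\frac{2a_2 x-x^2}{4(a_1+a_2)},
\]
so constraint \eqref{eq:ob1} becomes the first inequality of \eqref{eq:momenti1}, and analogously
\[
a_2-x+(x-a_1-2a_2)f_1^*+(a_1+a_2)(f_1^*)^2=\frac{-2a_1 x-x^2}{4(a_1+a_2)},
\]
so \eqref{eq:ob2} becomes the second inequality of \eqref{eq:momenti1}. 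I expect the only bookkeeping obstacle here is keeping the two obedience inequalities straight; everything collapses once one factors $2a_2-x$ out of the first constraint and expands $(x-a_1-2a_2)(2a_2-x)$ in the second.

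Finally I would combine the two steps. For the \emph{if} direction, if \eqref{eq:momenti1} holds then $f^*$ is feasible in Problem~\ref{prob:affine}; since it minimizes $J$ over the \emph{unconstrained} set of pointwise $[0,1]$-valued maps (Lemma~\ref{lemma:tildepi}), it is also the minimizer over the smaller obedient set, giving $\pi^*=f^*$ and $PoA=1$. For the \emph{only if} direction, $\pi^*=f^*$ means $f^*$ satisfies \eqref{eq:ob1}--\eqref{eq:ob2}, and by the computation above this is exactly \eqref{eq:momenti1}.
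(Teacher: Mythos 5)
Your proposal is correct and follows essentially the same route as the paper's proof: use \eqref{eq:supp1} to remove the clipping in $f_1^*$, substitute into the obedience constraints \eqref{eq:ob1}--\eqref{eq:ob2}, and simplify to the moment conditions \eqref{eq:momenti1}; your algebraic simplifications check out. You are somewhat more explicit than the paper about the logical structure of the two directions (feasibility of the relaxed minimizer implying optimality, and conversely), which is a welcome clarification but not a different argument.
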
\medskip
\begin{proof} 
Condition \eqref{eq:supp1} guarantees that $f^*$ does not saturate, i.e., $f^*_1 = (2a_2-x)/2(a_1+a_2).$ Plugging $f^*_1$ into the obedience constraint \eqref{eq:ob1}, we get that the constraint is satisfied if and only if
\begin{equation*}
\int_{\Theta} \left((x-a_2)\frac{2a_2-x}{2(a_1+a_2)}+\frac{(2a_2-x)^2}{4(a_1+a_2)}\right)d\mathbb{P}(a,x) \le 0,
\end{equation*}
which leads by some computation to
$$
\mathbb{E}\left[\frac{2a_2x-x^2}{\left(a_1+a_2\right)}\right] \leq 0
$$
Likewise, we get for the second obedience constraint \eqref{eq:ob2},
$$
\mathbb{E}\left[\frac{-2a_1 x-x^2}{\left(a_1+a_2\right)}\right] \leq 0,
$$
concluding the proof.
\end{proof}

\begin{remark}
As already observed, Problem \ref{prob} is in general non-convex unless the network is composed of two parallel links with affine delay functions. Hence, finding the optimal information policy $\pi^*$ is not an easy task in general. However, the approach adopted in Theorem \ref{thm1} to verify whether the system-optimum flow can be induced by information design can be generalized for arbitrary network topologies. Indeed, if there exists a policy $\pi^*$ that satisfies the obedience constraints and is compatible with the system-optimum flow (i.e., $f^*(\theta) = A\pi^*(\theta)$), then any other policy cannot achieve a better cost by definition of $f^*$, proving that $\pi^*$ is the optimal policy and achieves $PoA=1$.
\end{remark}

Observe by Theorem \ref{thm1} that, under \eqref{eq:supp1}, if $\mathbb{E}\left[x\right] = 0$ (i.e., the two links have same expected free-flow delay), and $a$ and $x$ are independent, i.e., $\E[a_e x] = \E[a_e] \E[x]$, then the two obedience constraints in \eqref{eq:momenti1} are always satisfied and optimality is thus achieved by information policy $\pi^*_1 = f_1^* = (2a_2-x)/2(a_1+a_2)$. Instead, this does not hold true in general if $E[x]=0$, but $a$ and $x$ are not independent.
We remark that condition \eqref{eq:momenti1} is necessary and sufficient for optimality assuming that \eqref{eq:supp1} holds. Condition \eqref{eq:supp1} guarantees that the full-information system optimum does not saturate and allows to express conditions for optimality in terms of moments of the random variables. However, \eqref{eq:supp1} is not necessary for optimality. We shall analyze in Section \ref{sec:uniform} a case where \eqref{eq:supp1} is violated but optimality is achieved.

\subsection{Deterministic linear coefficients and arbitrary prior}\label{sec:a_det}
In this section we analyze Problem \ref{prob:affine} when $a$ is known and $b$ is the only random variable (hence, also $x=b_1-b_2$ is random). The next result follows from Theorem \ref{thm1}.
\begin{theorem}\label{thm2}
	Consider Problem \ref{prob:affine} with $a$ known. Assume that:
	\begin{equation}\label{eq:supp2}
		supp(x) \subseteq \left[-2a_1, 2a_2\right].
	\end{equation}
	Then, $\pi^*=f^*$ and $PoA=1$ if and only if
	\begin{equation}\label{eq:momenti2}
	 -\frac{\mathbb{E}[x^2]}{2a_1} \le \mathbb{E}[x] \le \frac{\mathbb{E}[x^2]}{2a_2}.
	\end{equation}
\end{theorem}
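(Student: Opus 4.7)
The plan is to derive Theorem \ref{thm2} as a direct corollary of Theorem \ref{thm1}, specialized to the case in which $a=(a_1,a_2)$ is deterministic. The key observation is that when $a$ is non-random, the joint support of $(a,x)$ collapses to $\{a\}\times\supp(x)$, so $\min(\supp(a_2))=\max(\supp(a_2))=a_2$ and $\min(\supp(a_1))=\max(\supp(a_1))=a_1$. Under this identification, condition \eqref{eq:supp1} of Theorem \ref{thm1} is exactly
$$
\max(\supp(x))\le 2a_2,\qquad \min(\supp(x))\ge -2a_1,
$$
which is equivalent to \eqref{eq:supp2}. Hence \eqref{eq:supp2} is the precise restatement of \eqref{eq:supp1} in the deterministic-$a$ regime.

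Next, I would push $a$ through the expectations in \eqref{eq:momenti1}. Since $a_1+a_2$ is a positive constant, it can be factored out of the expectation; likewise $a_2$ and $a_1$ can be pulled out of the numerators. The first inequality of \eqref{eq:momenti1} becomes
$$
\frac{2a_2\,\E[x]-\E[x^2]}{a_1+a_2}\le 0,
$$
equivalent to $\E[x]\le \E[x^2]/(2a_2)$. The second inequality becomes
$$
\frac{-2a_1\,\E[x]-\E[x^2]}{a_1+a_2}\le 0,
$$
equivalent to $\E[x]\ge -\E[x^2]/(2a_1)$. Combining the two yields exactly \eqref{eq:momenti2}.

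Finally, I would invoke Theorem \ref{thm1}: once \eqref{eq:supp2} is verified, the full-information system optimum $f^*$ from Lemma \ref{lemma:tildepi} does not saturate, and the necessary-and-sufficient moment condition \eqref{eq:momenti1} for $\pi^*=f^*$ and $PoA=1$ reduces to \eqref{eq:momenti2}. There is no real obstacle in this argument beyond the routine algebraic reduction; the only point that requires a word of justification is that the equivalence ``$a$ deterministic $\Leftrightarrow$ singleton support'' correctly identifies the min and max of $\supp(a_e)$ with $a_e$ itself, so that the hypothesis \eqref{eq:supp1} collapses cleanly to \eqref{eq:supp2}.
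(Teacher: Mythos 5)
Your proposal is correct and follows essentially the same route as the paper, which likewise derives Theorem \ref{thm2} by specializing Theorem \ref{thm1} to deterministic $a$: the support condition \eqref{eq:supp1} collapses to \eqref{eq:supp2} and the moment conditions \eqref{eq:momenti1} reduce to \eqref{eq:momenti2} once the constants are pulled out of the expectations. Your write-up simply makes explicit the algebra that the paper leaves implicit.
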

\par
\bigskip
\begin{proof}
The proof follows from Theorem \ref{thm1}. In particular, when $a$ is publicly known, \eqref{eq:supp1} is equivalent to \eqref{eq:supp2} and \eqref{eq:momenti1} is equivalent to \eqref{eq:momenti2}.
\end{proof}

\begin{remark}
It is interesting to establish a parallelism between the price of anarchy in the setting of Theorem \ref{thm2} and in the standard setting of routing games where $x$ and $a$ are known to the users. To this end, it proves useful to write \eqref{eq:momenti2} as
\be\label{eq:variance}
\begin{cases}
	\sigma_x^2 \ge E[x](2a_2-E[x]) \\
	\sigma_x^2 \ge -E[x](2a_1+E[x]).
\end{cases}
\ee
where $\sigma_x^2$ is the variance of $x$. If $\sigma_x^2=0$ (i.e., $x$ is known) and $x$ belongs to $[-2a_1,2a_2]$, then Theorem \ref{thm2} states that optimality can be achieved if and only if 
$$
x(2a_1+x) \ge 0 \ge x(2a_2-x)
$$
i.e., when either $x=2a_2$ or $x=-2a_1$ (all users select the same link at user-equilibrium flow and the equilibrium coincides with the system-optimum), or $x = 0$. For other intermediate values of $x$, if $\sigma_x^2=0$ optimality cannot be achieved. Instead, if $x$ is uncertain, optimality is achievable for intermediate values of $\E[x]$ also. In particular, \eqref{eq:variance} states that the larger the variance is, the easier is to achieve optimality through information design. This shows that the uncertainty on the network state is beneficial for the system, in the sense that an omniscient planner that discloses privately the information on the network state can leverage the uncertainty to persuade the users to distribute according to the system-optimum flow.
\end{remark}

\subsection{Deterministic linear coefficients and uniform prior}\label{sec:uniform}
The theoretical results in Theorems \ref{thm1} and \ref{thm2} assume that the support of the random variables is such that under the system-optimum flow $f^*$ the users always distribute on both the links. This assumption indeed allows us to state sufficient and necessary conditions for optimality in terms of moments of the random variables that characterize whether the obedience constraints are active or inactive. In this section we consider a case-study that allows explicit computation even relaxing the assumption on the support. In particular, we assume that
\begin{equation}
	d\Prob(b_1,b_2) =
	\begin{cases}
		1 \quad & \text{if} \ b_1,b_2 \in [0,1]^2 \\
		0 & \text{otherwise},
	\end{cases}
	\label{eq:unif}
\end{equation}
and assume that $a$ is given and non-negative. The next result characterizes the solution of the information design problem. 

\begin{theorem}\label{thm3}
	Consider Problem \ref{prob:affine}. Let the prior distribution of $b$ be as in \eqref{eq:unif}, and let $a$ be known.
	Then, $PoA=1$ for every $a$ in $\mathds{R}_+^2$.
\end{theorem}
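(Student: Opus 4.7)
The plan is to set $\pi_1^\ast := f_1^\ast$, the full-information system optimum from Lemma \ref{lemma:tildepi}. Because $f_1^\ast$ pointwise minimizes the (convex) objective of Problem \ref{prob:affine}, any feasible policy has cost at least $C(f^\ast)$; hence, if $\pi_1^\ast = f_1^\ast$ satisfies the obedience constraints \eqref{eq:ob1}--\eqref{eq:ob2}, then it is optimal for Problem \ref{prob:affine} and $PoA = 1$. The task therefore reduces to verifying those two constraints at $\pi_1 = f_1^\ast$.

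The uniform prior is invariant under $(b_1,b_2)\leftrightarrow(b_2,b_1)$, which sends $x=b_1-b_2$ to $-x$; combined with the link relabeling $1 \leftrightarrow 2$, this sends $f_1^\ast(a_1,a_2,x)$ to $1 - f_1^\ast(a_2,a_1,-x)$ and interchanges \eqref{eq:ob1} with \eqref{eq:ob2}. Hence it suffices to verify \eqref{eq:ob1} for every $(a_1, a_2) \in \mathds{R}_+^2$. Under the uniform prior the induced density of $x$ is triangular, $f_x(x) = 1-|x|$ on $[-1,1]$, so $\E[x] = 0$ and $\E[x^2] = 1/6$. Setting $h(x) := (2a_2-x)/(2(a_1+a_2))$, one has $f_1^\ast = [h(x)]_0^1$, and saturation at $1$ (respectively at $0$) affects $[-1,1]$ only if $a_1 < 1/2$ (respectively $a_2 < 1/2$).

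I would then decompose the LHS of \eqref{eq:ob1} at $\pi_1 = f_1^\ast$ as $L(f_1^\ast) = \tilde L + \Delta_1 + \Delta_2$, where $\tilde L := \int_{-1}^1 h(x)(x/2)(1-|x|)\,dx = -1/[24(a_1+a_2)]$ is the value obtained with the unclipped $h$ throughout, and $\Delta_1, \Delta_2$ are the corrections coming from the two saturation regions. The key algebraic identity
\[
(a_1+x) - h(x)\tfrac{x}{2} \;=\; \frac{(x+2a_1)\bigl(x+2(a_1+a_2)\bigr)}{4(a_1+a_2)},
\]
together with the change of variables $u = x+2a_1$ (resp.\ $u = x-2a_2$), reduces each $\Delta_i$ to a polynomial integral, giving
\[
\Delta_1 = \frac{(1-2a_1)^3(1-2a_1-4a_2)}{48(a_1+a_2)}, \qquad \Delta_2 = \frac{(1-2a_2)^3(1+2a_2)}{48(a_1+a_2)},
\]
with $\Delta_1 = 0$ when $a_1 \geq 1/2$ and $\Delta_2 = 0$ when $a_2 \geq 1/2$.

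The proof would conclude with a four-case analysis depending on whether each $a_i$ exceeds $1/2$: (i) if $a_1, a_2 \geq 1/2$, then $L = \tilde L \leq 0$; (ii) if $a_1 < 1/2 \leq a_2$, the factor $1-2a_1-4a_2 < 0$ forces $\Delta_1 \leq 0$, so $L \leq \tilde L \leq 0$; (iii) if $a_2 < 1/2 \leq a_1$, a direct derivative check shows that $a_2 \mapsto (1-2a_2)^3(1+2a_2)$ is non-increasing on $[0,1/2]$ and hence bounded by $1 < 2$, which yields $L \leq 0$; (iv) if $a_1, a_2 < 1/2$, the substitution $p := 1-2a_1$, $q := 1-2a_2 \in (0,1]$ reduces the sign of $L$ to that of $B(p,q) := -2 + p^3(p+2q-2) + q^3(2-q)$, and since $\partial_q B(p,q) = 2p^3 + 2q^2(3-2q) \geq 0$, one has $B(p,q) \leq B(p,1) = p^4 - 1 \leq 0$. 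The main obstacle is spotting the factorization of $(a_1+x) - h(x)\,x/2$ and the reduction to the bivariate polynomial $B(p,q)$; without them, the algebra in case (iv) is unwieldy.
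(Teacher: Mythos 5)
Your proposal is correct, and while it follows the same overall strategy as the paper's proof (plug the unconstrained optimum $f_1^\ast$ from Lemma \ref{lemma:tildepi} into the obedience constraints and verify feasibility, so that optimality and $PoA=1$ follow for free), the execution is genuinely different and in one respect stronger. The paper changes variables to $(x,y)=(b_1-b_2,b_1+b_2)$, splits $[0,1]^2$ into the two saturation triangles and the middle strip, assumes $a_1\le a_2$ to enumerate three cases, and checks \emph{both} constraints in each case; you instead integrate out $y$ immediately to get the triangular density $1-|x|$ of $x$, invoke the $(b_1,b_2)\leftrightarrow(b_2,b_1)$, $1\leftrightarrow 2$ symmetry to reduce to the single constraint \eqref{eq:ob1} over all of $\mathds{R}_+^2$, and organize the computation as ``unclipped value $\tilde L$ plus saturation corrections $\Delta_1,\Delta_2$'' (your identity $(a_1+x)-h(x)x/2=(x+2a_1)(x+2(a_1+a_2))/(4(a_1+a_2))$ and the resulting closed forms for $\Delta_1,\Delta_2$ check out, and your $B(p,q)$ is a positive multiple of the paper's quartic \eqref{eq:cons1sat} under $p=1-2a_1$, $q=1-2a_2$). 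The main payoff of your route is the last step: the paper disposes of the polynomial inequalities \eqref{eq:cons1sat} and \eqref{eq:cons2sat} by asserting they are ``possible to show'' false and pointing to numerical plots, whereas your monotonicity arguments ($\partial_q B=2p^3+2q^2(3-2q)\ge 0$, hence $B(p,q)\le B(p,1)=p^4-1\le 0$, and the analogous bound $(1-2a_2)^3(1+2a_2)\le 1<2$ in the mixed case) give a fully analytic verification, so your version actually closes the one step the paper leaves to a figure.
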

\begin{proof}
	The proof is contained in the Appendix.
\end{proof}

Notice that specifying the prior distribution of $b$ determines the distribution of $x$. Moreover, notice that \eqref{eq:unif} implies that $\supp(x) = [-1,1]$. Hence, \eqref{eq:supp2} is satisfied if $a_1, a_2 \ge 1/2$. Nonetheless, Theorem \ref{thm2} states that optimality is achieved for every value of $a$, extending the results of Theorem \ref{thm2}.

\section{Conclusion}\label{sec:conclusion}
In this paper we study optimal information design in Bayesian routing games. The problem investigates how a planner that observes the stochastic state of a transportation network should privately disclose information to strategic users in order to influence their routing behavior, with the goal of steering the system towards the system optimum. Due to the revelation principle, the analysis focuses without loss of generality on private path recommendations under the assumption that users have no incentive in deviating from the recommendation they receive (obedience constraints). We formalize and analyze the general problem, and then focus on the case of two links and affine delay functions, establishing sufficient conditions on the moments and on the support of the random variables under which the system optimum can be achieved by information design. We then analyze a special case where optimality is achieved even if these sufficient conditions are not satisfied.

This work focuses on when optimality is achievable. Future research lines focus on quantifying how suboptimal is the best information policy when optimality cannot be achieved. Other research lines aim at expanding the analysis to more general network topologies and non-linear delay functions. Finally, it would be interesting to analyze the outcome of multiple information providers competing for customers.

\bibliographystyle{IEEEtran}
\bibliography{references.bib}

\appendix
\subsection* {Proof of Theorem \ref{thm3}}\label{app}
As proved in Lemma \ref{lemma:tildepi}, the solution of the relaxed problem without obedience constraints is
$$
\tilde{\pi}_1\left(a,b\right)= \left[\frac{2a_2-b_1+b_2}{2\left(a_1+a_2\right)}\right]_0^1.
$$
To prove the statement, we need to verify that $\tilde{\pi}_1$ satisfies the obedience constraints \eqref{eq:ob1}, \eqref{eq:ob2} and is therefore the solution of the information design problem. The two obedience constraint are in the form
\begin{equation}\label{eq:integral}
	\begin{aligned}
&\int_{[0,1]} \int_{[0,1]} g(a,b) db_1 db_2 \le 0,\\
&\int_{[0,1]} \int_{[0,1]} h(a,b) db_1 db_2 \le 0
\end{aligned}
\end{equation}
where $g$ refers to obedience constraint \eqref{eq:ob1} and $h$ to the second one \eqref{eq:ob2}. We focus on $g$, but similar considerations hold for $h$. In particular,
\begin{equation}\label{eq:g}
g(a,b) = (b_1-b_2-a_2)\tilde\pi_1(a,b)+(a_1+a_2)\tilde\pi_1^2(a,b).
\end{equation} 
Observe that $\tilde\pi$ and $g$ depend on $b$ via the difference $x=b_1-b_2$ and $a$ is a constant (similar conditions hold for the second obedience constraint, see \eqref{eq:ob1}, \eqref{eq:ob2}).
For these reasons, from now on, we shall omit the dependence of $g$ on $a$, and write $g(x)$ instead of $g(a,b)$. To simplify the notation, let us define $$\alpha = \frac{1}{2\left(a_1+a_2\right)}, \quad \beta = \frac{a_2}{a_1+a_2}.$$
\begin{figure}
	\centering
	\includegraphics[width=0.98\linewidth]{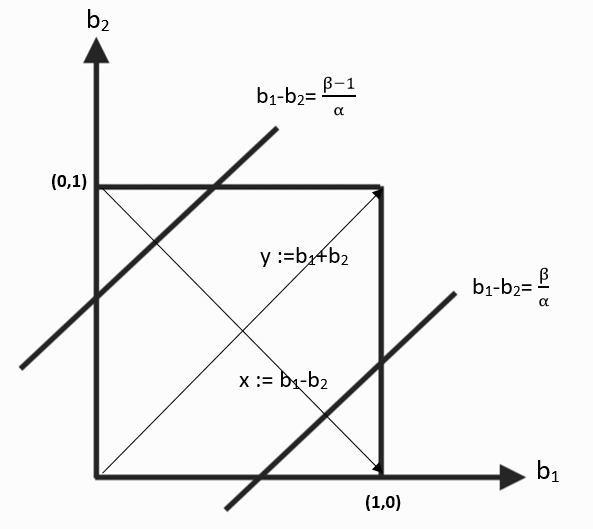}
	\caption{Representation of the space of the variables $b_1$ and $b_2$.}
	\label{fig:simplesso}
\end{figure}
With this change of variables, 
\begin{equation}
	\tilde\pi_1\left(x\right)=
	\begin{cases}
		0 & x \geq \frac{\beta}{\alpha} \\
		\beta-\alpha x & \frac{\beta-1}{\alpha} < x < \frac{\beta}{\alpha} \\
		1 & x \leq \frac{\beta-1}{\alpha}  \\
	\end{cases}
	\label{eq:pi_3}
\end{equation} 
Since $g$ depends on $\tilde\pi$, the structure of $\tilde\pi_1$ suggests a split of the integral over $[0,1]^2$ in three areas, as shown in Figure \ref{fig:simplesso}.
We now define $y=b_1+b_2$, and make a change of variables from $(b_1,b_2)$ to $(x,y)$. Notice that $g$ does not depend on $y$. The first integral in \eqref{eq:integral} thus can be written as
\begin{equation}\label{eq:integral3}
\frac{1}{2}\Big(\int_{T_1} g_{T_1}(x) dx dy + \int_R g_R(x) dx dy + \int_{T_2}g_{T_2}(x) dx dy\Big),
\end{equation}
where the factor $1/2$ comes from the change of variables, $T_1$ is the top-left triangle with $x \le (\beta-1)/\alpha$, $T_2$ is the bottom-right triangle with $x \ge \beta/\alpha$, $R$ is the area in the middle, and $g$ in each area is found by plugging \eqref{eq:pi_3} into \eqref{eq:g}. We also assume $a_1 \le a_2$ without loss of generality. Observe that depending on the values of $\beta, \alpha$ (or equivalently $a_1, a_2$) the triangles $T_1$ and $T_2$ might not exist. Three different cases need to be studied:
\begin{enumerate}
	\item There are both $T_1$ and $T_2$, i.e.,
	\begin{equation}\label{eq:case1}
	\frac{\beta-1}{\alpha}=-2a_1 > -1, \quad \frac{\beta}{\alpha}=2a_2 < 1;
	\end{equation}
	\item There is $T_1$ but not $T_2$, i.e., 
	$$
	\frac{\beta-1}{\alpha}=-2a_1 > -1, \quad \frac{\beta}{\alpha}=2a_2 \ge 1;
	$$
	\item There are not $T_1$ and $T_2$, i.e., 
	$$
	\frac{\beta-1}{\alpha}=-2a_1 \le -1, \quad \frac{\beta}{\alpha}=2a_2 \ge 1.
	$$
	
\end{enumerate}
Observe that the case with $T_2$ and without $T_1$ is excluded by the assumption $a_1 \le a_2$, which is without loss of generality. In case 1), the integrals on the three areas can be written as follows.
\begin{equation}\label{eq:T1}
\frac{1}{2}\int_{T_1} g_{T_1}(x) dx dy = \frac{1}{2}\int_{-1}^{\frac{\beta-1}{\alpha}} \int_{-x}^{x+2} g_{T_1}(x)dydx,
\end{equation}
with $g_{T_1}(x) = a_1+x$,
\begin{equation}\label{eq:T2}
	\frac{1}{2}\int_{T_2} g_{T_2}(x) dx dy = \frac{1}{2}\int_{\frac{\beta}{\alpha}}^{1}\int_{x}^{x+2}g_{T_2}(x)dydx,
\end{equation}
with $g_{T_2}(x) = 0$, and
\begin{equation}\label{eq:R}
	\begin{aligned}
	\frac{1}{2}\int_{R} g_{R}(x) dxdy &=\int_{0}^{\frac{1-\beta}{\alpha}}\int_{-y}^{+y}g_R(x)dxdy \\ &+\int_{\frac{1-\beta}{\alpha}}^{\frac{\beta}{\alpha}}\int_{\frac{\beta-1}{\alpha}}^{+y}g_R(x)dxdy \\ &+\int_{\frac{\beta}{\alpha}}^{1}\int_{\frac{\beta-1}{\alpha}}^{\frac{\beta}{\alpha}}g_R(x)dxdy
	\end{aligned}
\end{equation}
with $g_R(x)=\frac{\beta^2}{2\alpha} -a_2\beta + a_2 \alpha x -\frac{1}{2}\alpha x^2$.
\begin{figure}
	\centering
	\includegraphics[width=0.8\linewidth]{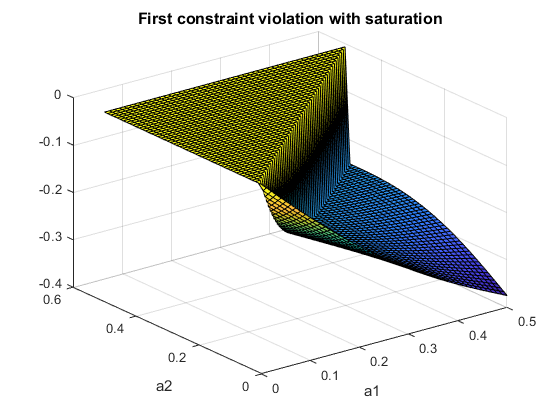}
	\caption{Plot of \eqref{eq:cons1sat} with $a$ in $\left[0,\frac{1}{2} \right]^2$.}
	\label{fig:fir_const}
\end{figure} 
\begin{figure}
	\centering
	\includegraphics[width=0.8\linewidth]{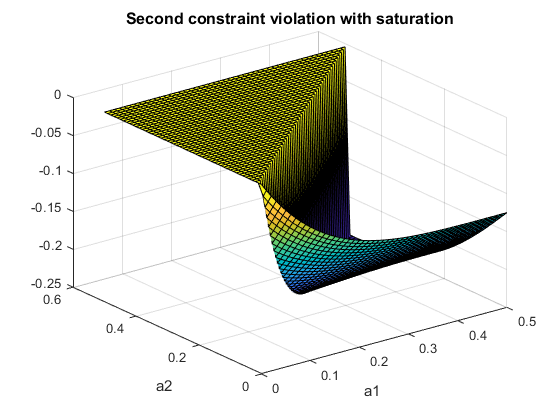}
	\caption{Plot of \eqref{eq:cons2sat} with $a$ in $\left[0,\frac{1}{2}\right]^2$.}
	\label{fig:sec_const}
\end{figure} 
Plugging \eqref{eq:T1}, \eqref{eq:T2} and \eqref{eq:R} into \eqref{eq:integral3} it turns out that the first constraint is violated if and only if
\begin{equation*}
	\begin{aligned} 
		&\frac{1}{2}\int_{-1}^{\frac{\beta-1}{\alpha}} \int_{-x}^{x+2}(a_1+x) dydx  +\\
		&\int_{0}^{\frac{1-\beta}{\alpha}}\int_{-y}^{+y} \left(\frac{\beta^2}{2\alpha} -a_2\beta + a_2 \alpha x -\frac{1}{2}\alpha x^2\right) dxdy +\\
		&\int_{\frac{1-\beta}{\alpha}}^{\frac{\beta}{\alpha}}\int_{\frac{\beta-1}{\alpha}}^{+y} \left(\frac{\beta^2}{2\alpha} -a_2\beta + a_2 \alpha x -\frac{1}{2}\alpha x^2\right) dxdy +\\
		&\int_{\frac{\beta}{\alpha}}^{1}\int_{\frac{\beta-1}{\alpha}}^{\frac{\beta}{\alpha}} \left(\frac{\beta^2}{2\alpha} -a_2\beta + a_2 \alpha x -\frac{1}{2}\alpha x^2\right) dxdy +\\ 
		&\frac{1}{2}\int_{\frac{\beta}{\alpha}}^{1}\int_{x}^{-x+2} 0 dydx >0,
	\end{aligned}
\end{equation*}
By some computation, this proves equivalent to
\begin{equation}
	\begin{aligned}
	&2a_1^4-2a_2^4-4a_1^3+2a_2^3+4a_1^3a_2+\\
	+&3a_1^2-6a_1^2a_2-a_1-a_2+3a_1a_2 >0.
	\end{aligned}
	\label{eq:cons1sat}
\end{equation}
Observe that \eqref{eq:cons1sat} must be evaluated under \eqref{eq:case1}, i.e., $a_1,a_2<1/2.$
It is possible to show that \eqref{eq:cons1sat} is never satisfied in the domain, i.e. the constraint is satisfied, as illustrated in Figure \ref{fig:fir_const}. By analogous computation, by replacing $h$ with $g$, we get that the second obedience constraint \eqref{eq:ob2} is violated if and only if
\begin{equation}	\label{eq:cons2sat}
	\begin{aligned}
	&-2a_1^4+2a_2^4+2a_1^3-4a_2^3+4a_1a_2^3+\\
	&+3a_2^2-6a_1a_2^2-a_1-a_2+3a_1a_2 >0.
	\end{aligned}
\end{equation}
It is possible to show that \eqref{eq:cons2sat}
is never satisfied in the domain, i.e. the constraint is satisfied, as illustrated in Figure \ref{fig:sec_const}. This shows that in case 1) both the constraints are satisfied by $\tilde\pi_1$, which in turn implies that $PoA=1$.

We now consider case 2), in which there exists $T_1$ but not $T_2$. The condition for the first constraint to be violated is
\begin{equation*}
	\begin{aligned} 
		&\frac{1}{2}\int_{-1}^{\frac{\beta-1}{\alpha}} \int_{-x}^{x+2}(a_1+x) dydx  +\\
		&\int_{0}^{\frac{1-\beta}{\alpha}}\int_{-y}^{+y} \left(-a_2\beta +\frac{\beta^2}{2\alpha} + a_2 \alpha x -\frac{1}{2}\alpha x^2\right) dxdy +\\
		&\int_{\frac{1-\beta}{\alpha}}^{1}\int_{\frac{\beta-1}{\alpha}}^{+y} \left(-a_2\beta +\frac{\beta^2}{2\alpha} + a_2 \alpha x -\frac{1}{2}\alpha x^2\right) dxdy >0,
	\end{aligned} 
\end{equation*}
where the first term refers to the integral on $T_1$, and the last two to the integral on $R$. Rearranging the terms, we get that the first constraint is satisfied if and only if
$$
	\begin{aligned}
	&4a_1^4-8a_1^3+8a_1^3a_2+6a_1^2-12a_1^2a_2\\
	-&2a_1-a_2+6a_1a_2 -\frac{1}{4} > 0,
	\end{aligned}
$$
or, equivalently
\begin{equation}\label{eqeq}
	a_2 < \frac{-4a_1^4+8a_1^3-6a_1^2+2a_1+\frac{1}{4}}{\left(2a_1-1\right)^3}.
\end{equation}
However, since we are in case $a_1<1/2$, the right-hand side of this equation is always negative. On the other hand, $a_2 \ge 1/2$, which implies that \eqref{eqeq} cannot be satisfied and the first obedience constraint is always satisfied by $\tilde\pi$ in case 2). Similar consideration hold for the second constraint.

Optimality in case 3) follows from Theorem \ref{thm2}. Indeed, in this case optimality is guaranteed by the fact that $a_1, a_2 \ge 1/2$, which implies \eqref{eq:supp2}. This concludes the proof.

\end{document}